\documentclass[paper,letterpaper]{ieice}
\usepackage{amsmath,amssymb}
\usepackage[T1]{fontenc}
\usepackage{graphicx}
\usepackage{amsthm}
\usepackage[varg]{txfonts}
\usepackage{color}
%
\setlength\floatsep{1mm}
\setlength\textfloatsep{0mm}
\setlength\intextsep{1mm}
\setlength\abovecaptionskip{0mm} 
\allowdisplaybreaks
\addtolength{\topmargin}{-10mm}
\setcounter{page}{2026}

\field{A}
\vol{93}
\no{11}
\SpecialSection{Information Theory and Its Applications}
\title{Vulnerability of MRD-Code-based Universal\\Secure Network Coding against Stronger Eavesdroppers}
\authorlist{
 \authorentry[shioji@it.ss.titech.ac.jp]
 {Eitaro SHIOJI}{n}{labelA}[present affiliate label]
 \authorentry[ryutaroh@rmatsumoto.org]
 {Ryutaroh MATSUMOTO}{r}{labelA}
 \authorentry[uyematsu@ieee.org]
 {Tomohiko UYEMATSU}{e}{labelA}
}
\affiliate[labelA]{The authors are with Dept.~of Communications and Integrated Systems, Tokyo Institute of Technology, Tokyo, 152-8552 Japan.}
\paffiliate[present affiliate label]{Presently, the author is with NTT Information Sharing Platform Laboratories, NTT Corporation, Musashino-shi, 180-8585 Japan.}
\titlenote{A part of this paper was presented at 2010 IEEE International Symposium on Information Theory.}

\received{2010}{2}{19}
\revised{2010}{5}{20}

\newtheorem{definition}{Definition}
\newtheorem{theorem}{Theorem}
\newtheorem{lemma}{Lemma}

\newtheorem{corollary}{Corollary}
\if 0
\definecolor{blue}{cmyk}{1,1,0,0.2}
\fi

\begin{document}
\maketitle
\begin{summary}
Silva et al.\ proposed a universal secure network coding scheme based on MRD codes, which can be applied to any underlying network code. This paper considers a stronger eavesdropping model where the eavesdroppers possess the ability to re-select the tapping links during the transmission. We give a proof for the impossibility of attaining universal security against such adversaries using Silva et al.'s code for all choices of code parameters, even with a restricted number of tapped links. We also consider the cases with restricted tapping duration and derive some conditions for this code to be secure.
\end{summary}
\begin{keywords}
network coding, secure network coding, linear network coding, universal security, MRD code
\end{keywords}

\section{Introduction}

The notion of network coding, proposed by Ahlswede et al.~\cite{AHLSWEDE}, has been attracting much attention. On a conventional routing network, each node is only allowed to relay the received packets to the next node, while on a network with network coding support, each node is allowed to perform some data processing using the received packets and send the result to the next node. It is known that the use of network coding offers many advantages over the use of conventional network, such as achievement of higher rate in multicast communications or better energy efficiency in wireless communications~\cite{NCF}.

Secrecy of communication, or more specifically, information-theoretically secure communication in the presence of an adversary capable of tapping a fixed number of links of its choice, is considered as one of such advantages of network coding. Such a scheme, referred to as secure network coding, consists of the following two components: the network code which determines how packets are coded at intermediate nodes, and the outer code which is a pre-coding done at the source before transmission. Several secure network codes have been proposed, such as the one by Cai et al.~\cite{CAI}. However, these codes require the reconstruction of the network code, or the reconstruction of the outer code in order to attain security for a given set of tapped links. Such a property causes problems, such as difficulty when securing random network codes~\cite{RNC}, where network codes are constructed randomly.

Silva et al.\ proposed a universal secure network coding method~\cite{SILVA} based on MRD codes~\cite{GABIDULIN} and coset coding scheme~\cite{WYNER}. This code can be applied on top of any already-constructed network code to attain security. However, due to its use of vector outer code which requires that each symbol be transmitted over multiple time slots, it must assume that the tapped links are fixed during the transmission period. 

We consider a stronger eavesdropping model where the eavesdroppers possess the ability to re-select the tapping links during the transmission. Such a model is worth consideration because the conventional non-universal secure network codes (e.g.~\cite{CAI},~\cite{ROUAYHEB}) are guaranteed to be secure against it. Moreover, this model corresponds to some practical situations where random network coding is used and the coding vectors are time-varying, such as the robust random network coding scheme proposed by Chou et al.~\cite{CHOU}. Also, the current standard of the IP protocol allows the network to split a packet into multiple fragments and carry them through multiple distinct routes, as explained in \cite[Section 11.5]{RW94}. Thus, the stronger eavesdropping model considered here has practical importance when Silva et al.'s method is used over the current Internet.

This paper aims to clarify the security of Silva et al.'s universal code against this eavesdropping model, and is organized as follows. In Section 2 we define some notations and briefly review some of the existing results of secure network coding, and describe Silva et al.'s universal secure network code. In Section 3 we introduce our stronger eavesdropping model. In Section 4 we prove the vulnerability of this universal code against our model for all code parameters. We also prove that the code is vulnerable even with a limited number of tapped links. Moreover, the cases with shorter tapping duration are considered, and sufficient conditions and necessary conditions for the code to be secure are given. In Section 5 we state our conclusion and the future tasks. 

\section{Preliminaries}

In this section we define our basic notations and review some of the existing results of secure network coding.

\subsection{Extension Field}

The extension field $\mathbb{F}_{q^m}$ of $\mathbb{F}_q$ can be regarded as a vector space over $\mathbb{F}_q$. Thus, when the basis of this space is fixed, an element of $\mathbb{F}_{q^m}$ can be represented as an $m$-dimensional vector over $\mathbb{F}_q$. For $y \in \mathbb{F}_{q^m}$, denote the $i$-th element of its vector representation as $y^{(i)}$. Accordingly, the vector representation of $x \in \mathbb{F}_{q^m}$ is written as $(x^{(1)}, x^{(2)}, \cdots , x^{(m)}) \in \mathbb{F}_q^m$.

\subsection{Network Coding}

Data communication over a network is considered. We use a network model defined by an acyclic and directed graph $G = (V,E)$, where $V$ and $E$ denote the set of nodes and the set of links, respectively. In this model we assume that, each link can carry an element of $\mathbb{F}_q$ per unit time, and data flowing on the network is not affected by delays, erasures or errors. 

Let $s \in V$ and $\mathcal{R} \subset V$ denote the source node and the set of sink nodes, respectively. The source node wishes to multicast the sequence $X = (X_1, X_2, \cdots, X_n)^T \in \mathbb{F}_q^n$ to all sink nodes at rate $n$. The rate is defined as the number of elements of $\mathbb{F}_q$ transmitted at the source node per unit time. Assume $n \leq \text{min}\{\text{maxflow}(s,r)\mid r \in \mathcal{R}\}$ holds, where $\text{maxflow}(i,j)$ denotes the maximum flow from node $i$ and $j$. We assume that linear network coding \cite{LNC} is employed on the network, i.e. the type of data processing performed on the packets at each node is limited to linear combination. This implies that the data flowing on any link on the network can be represented as an $\mathbb{F}_q$-linear combination of the sequence $X_1, X_2, \cdots,X_n$. Thus, the information flowing on a link $e$ can be denoted as $Y_e= \vec{b}_e \cdot X$ using a global coding vector (GCV), $\vec{b}_e = (b_1, b_2, \cdots , b_n)^T \in \mathbb{F}_q^n$, where ``$\cdot$" denotes the inner product operator for vectors. When one has access to, say, the $l$ links $e_1, e_2, \cdots, e_l$, then the information obtained from these links is denoted as $MX \in \mathbb{F}_q^l$, where $M = (\vec{b}_{e_1},\vec{b}_{e_2},\cdots, \vec{b}_{e_l})^T$.

Constructing a network code is equivalent to fixing the GCV of each link by setting the coefficients of the linear combination performed at each node. A network code is called feasible if every sink is able to decode $X$. When $q$ is sufficiently large, a feasible network code for rate $n$ multicast can always be constructed~\cite{NCF}.

\subsection{Secure Network Coding} \label{sec:SNC}

The wiretap network model used in the works \cite{CAI} and \cite{SILVA} on which secure network coding is employed is described below. For simplification, only one receiver is assumed. Let $F$ be some extension field of $\mathbb{F}_q$.

\begin{itemize}
\item \textbf{Sender:} The sender wishes to send the secret information sequence represented by a random variable $S = (S_1, S_2, \cdots, S_k)^T$ distributed uniformly over $F^k$. $S$ is first coded into the sequence $X = (X_1, X_2, \cdots, X_n)^T \in F^n$ using an outer code and then $X$ is sent over the network with a feasible network code.
\item \textbf{Receiver:} The receiver receives the information sequence $Y = AX = (Y_1, Y_2, \cdots, Y_n)^T \in F^n$, where $A$ is the matrix constructed by appending the GCVs of the input links to the receiver node.
\item \textbf{Eavesdropper:} The eavesdropper is able to wiretap any $\mu$ links on the network. Let the set of tapped links be $\mathcal{I} =  \{e_1, e_2, \cdots, e_\mu\} \subseteq E$. Then the wiretapped information sequence is represented as $W = BX = (W_1, W_2, \cdots, W_\mu)^T \in F^\mu$ using the matrix $B = (\vec{b}_{e_1}, \vec{b}_{e_2}, \cdots, \vec{b}_{e_\mu})^T \in \mathbb{F}_q^{\mu \times n}$.
\end{itemize}

The security which guarantees that no information about $S$ leaks out to the wiretapper even when $\mu$ arbitrary links are wiretapped, is defined as follows.

\begin{definition}[strong security~\cite{CAI}] \label{def:strong}
\begin{eqnarray}
&&H(S|Y) = 0\label{eqn:decodability},\\
&&I(S;W=BX) = 0, \forall \mathcal{I} \subseteq E,|\mathcal{I}| = \mu. \nonumber
\end{eqnarray}
\end{definition}

\noindent Condition (\ref{eqn:decodability}) is satisfied if the outer code used is uniquely decodable and the network code used is feasible. Cai et al.\ showed a construction method~\cite{CAI} for secure network codes that satisfies the conditions in Definition \ref{def:strong} for $\mu = n - k$, using $F = \mathbb{F}_q$.

\if 0
We remark that, since each link can only carry an element of $\mathbb{F}_q$ per time slot, using any field larger than $F = \mathbb{F}_q$ requires $X$ be split and be transmitted at the source over multiple time slots. The definition of $B$, however, poses a restriction to the tapped links in such a way that they are fixed during the transmission. We discuss more about this issue in Section 4.\\
\fi

\subsection{Universal Secure Network Code}

The definition of strong security depends on the GCVs of the set of tapped links $\mathcal{I}$, implying that it is dependent on the underlying network code. Silva et al.\ proposed a coding scheme that attains strong security that is independent of the network code, as defined below.

\begin{definition}[universal strong security~\cite{SILVA}]
\begin{eqnarray}
&&H(S|Y) = 0, \label{eqn:USS1} \\
&&I(S;W=BX) = 0, \forall B \in \mathbb{F}_{q}^{\mu \times n}.\label{eqn:USS2}
\end{eqnarray}
\end{definition}

\noindent The universal code is based on MRD codes\cite{GABIDULIN} and coset coding scheme\cite{WYNER}. MRD codes are a class of linear code over $\mathbb{F}_{q^m}$ which is optimal in the rank-distance sense. Coset coding scheme is a type of randomized coding described as follows. Let $H$ be the parity check matrix of a $[n,n - k]$ linear code $\mathcal{C}$ over $F$. To code $S=(S_1, \cdots , S_k) \in F^k$ into $X=(X_1, \cdots, X_n) \in F^n$, regard $S$ as a syndrome of $\mathcal{C}$, and choose $X$ uniformly random from the corresponding coset. Using these tools, the  communication procedure of the universal network code is briefly described as follows:

The procedures of secret communication using the universal secure network code is briefly described as follows:
\begin{enumerate}
\item Choose an integer $m \geq n$.
\item Construct an $[n,\mu = n - k]$ MRD code over $\mathbb{F}_{q^m}$.
\item Encode $S \in \mathbb{F}_{q^m}^k\rightarrow X \in \mathbb{F}_{q^m}^n$ by coset coding scheme based on the MRD code.
\item Split $X$ and send them over $m$ time slots using a feasible network code, i.e. transmit  $(X_1^{(t)},X_2^{(t)},\cdots,X_n^{(t)})^T \in \mathbb{F}_q^n$ at time $1 \leq t \leq m$.
\end{enumerate}

\section{Stronger Eavesdropping Model}

In this section, we propose a stronger eavesdropping model than the one presented in Section \ref{sec:SNC}.

\subsection{Model Definition}

In the conventional non-universal secure network coding scheme, $F = \mathbb{F}_q$ is used, but note that in the universal scheme, due to the use of MRD code over $\mathbb{F}_{q^m}$, $F = \mathbb{F}_{q^m}$ is used. Since the network can only transmit up to $n$ elements of $\mathbb{F}_q$ per unit time, the universal code requires that a secret message $S$ be transmitted over multiple time slots, while the conventional codes require only one. The definition of the wiretap network model implies that the universal code assumes the selection of tapped links to be fixed during the transmission. Hence, we replace the eavesdropper model presented in Section \ref{sec:SNC} with the following stronger model. 

\vspace{1mm}
\noindent \textbf{Stronger Eavesdropper:} At each time slot of the transmission over $m$ time slots, the wiretapper can re-select the set of $\mu$ tapping links. Let $e_{i,t} \in E$ denote the $i$-th link tapped at time $t$. The wiretapped links are then,  $e_{1,1}, e_{2,1}, \cdots, e_{\mu,1}, \cdots\cdots, e_{1,m}, e_{2,m}, \cdots, e_{\mu,m}$. For $x = (x_1, x_2, \cdots, x_n)^T \in \mathbb{F}_{q^m}^n$, define $\bar{x} \in \mathbb{F}_{q}^{mn}$ as
\begin{equation*}
\bar{x} \triangleq (x_1^{(1)},x_2^{(1)},\cdots, x_n^{(1)}, \cdots \cdots, x_1^{(m)},x_2^{(m)},\cdots, x_n^{(m)})^T.
\end{equation*}
Note that there is a one-to-one correspondence between $x$ and $\bar{x}$. For simplification, let $\vec{b}_{i,t} \triangleq \vec{b}_{e_{i,t}}$. The GCVs of the $\mu$ links tapped at time $t$ are $\vec{b}_{1,t},\cdots, \vec{b}_{\mu,t} \in \mathbb{F}_q^{n}$. Also, define $\tilde{B} \in \mathbb{F}_q^{m\mu \times mn}$ and $B_t \in \mathbb{F}_q^{\mu \times n}$ as follows:
\begin{equation*}
\tilde{B} \triangleq 
\left[
\begin{array}{cccc}
B_1&&&\\
&B_2&&\\
&&\ddots&\\
&&&B_m
\end{array}
\right],\quad
B_t \triangleq 
\left[
\begin{array}{c}
\vec{b}_{1,t}\mbox{}^T\\
\vec{b}_{2,t}\mbox{}^T\\
\vdots\\
\vec{b}_{\mu,t}\mbox{}^T\\
\end{array}
\right].
\end{equation*}

\noindent Then, the information obtained by the wiretapper is represented by the random variable $\tilde{W}$ distributed over $\mathbb{F}_q^{m\mu}$, defined by
\begin{equation*}
\tilde{W} \triangleq \tilde{B}\bar{X}.
\end{equation*}

\noindent We now define the following security conditions that assure security against our eavesdropping model.
\begin{definition}[universal $m$-strong security]\label{def:m-univ}
\begin{eqnarray*}
&&H(S|Y) = 0,\\
&&I(S;\tilde{W}=\tilde{B}\bar{X}) = 0,\forall B_t \in \mathbb{F}_q^{\mu \times n}, t = 1, \cdots, m.
\end{eqnarray*}
\end{definition}

\noindent Note that, the conventional eavesdropping model defined in Section \ref{sec:SNC} corresponds to the special case of our model with $\vec{b}_{i,t_1} = \vec{b}_{i,t_2}, \forall t_1,t_2,i $. Also note that the security of the non-universal conventional secure network codes such as the one by Cai~\cite{CAI}, is not affected by such a strengthening of the eavesdropper because a secret message is transmitted over only one time slot. To be fair with the universal code, we also mention that even when $m$ secret messages are regarded as one message and are sent over $m$ time period, the conventional non-universal codes remain secure. To avoid confusion, we mention that universal $m$-strong security and $k$-strong security\cite{HARADA} are distinct notions.

\subsection{Code Example}

We present an example of Silva et al.'s universal secure network code and show that it is insecure against our eavesdropping model. The example code is constructed using the following parameters.

\begin{itemize}
\item $q = 2$, $k = 1$, $n = 2$, $m = 2, \mu = n - k = 1$.
\item $\mathbb{F}_{2^2}$ constructed with the root $\alpha$ of primitive polynomial $f(x) = x^2 + x + 1$. (Table \ref{table:field} shows the elements of this field in power, polynomial, and vector representation)
\begin{table}[t]
\begin{center}
\caption{The elements of $\mathbb{F}_{2^2}$}
\label{table:field}
\vspace{2mm}
\begin{tabular}[ht]{|c|c|c|}
\hline
Power & Polynomial & Vector\\
\hline
\hline
Zero & $0$ & $(0,0)$\\
\hline
$\alpha^0$ & $1$ & $(0,1)$\\
\hline
$\alpha^1$ & $\alpha^1$ & $(1,0)$\\
\hline
$\alpha^2$ & $\alpha^1 + 1$ & $(1,1)$\\
\hline
\end{tabular}
\end{center}
\end{table}
\item A parity check matrix $H = [1,\alpha]$ of a $[2,1]$MRD code over $\mathbb{F}_{2^2}$. 
\end{itemize}

\noindent Note that between $X = (X_1, X_2)^T$ and $S$, we have the relation 
\begin{equation}
S = HX = X_1 + \alpha X_2.
\end{equation}
This code uses a network code over $\mathbb{F}_2$ at rate $2$, so it is sufficient to consider only the links $e_1, e_2, e_3$ with GCVs $\vec{b}_{e_1} = (0,1)^T$, $\vec{b}_{e_2} = (1,0)^T$, $\vec{b}_{e_3} = (1,1)^T$. This implies that the information flowing on an arbitrary link is one of $X \cdot \vec{b}_{e_1} = X_1$, $X \cdot \vec{b}_{e_2} = X_2$, or $X \cdot \vec{b}_{e_3} = X_1 + X_2$. Table \ref{table:example} shows the value, represented in power and vector form, on each link with all distinct GCVs for each $X$ sent. The value of $S$ is also shown. 

\begin{table}[t]
\begin{center}
\caption{The value flowing on each link and $S$, for each $X$}
\label{table:example}
\vspace{2mm}
\begin{tabular}[c]{|c|c|c|c|}
\hline
$X_1$ & $X_2$ & $X_1 + X_2$ &  $S$\\
\hline
\hline
$0 =(\underline{0},0)$ & $0 =(0,0)$ & $0 =(0,0)$ & $0$ \\
$0 =(\underline{0},0)$ & $\alpha^0 =(0,1)$ & $\alpha^0 =(0,\underline{1})$ & $\alpha^1$\\
$0 =(\underline{0},0)$ & $\alpha^1 =(1,0)$ & $\alpha^1 =(1,0)$ & $\alpha^2$\\
$0 =(\underline{0},0)$ & $\alpha^2 =(1,1)$ & $\alpha^2 =(1,\underline{1})$ & $\alpha^0$\\
\hline
$\alpha^0 =(\underline{0},1)$ & $0 =(0,0)$ & $\alpha^0 =(0,\underline{1})$ & $\alpha^0$\\
$\alpha^0 =(\underline{0},1)$ & $\alpha^0 =(0,1)$ & $0 =(0,0)$ & $\alpha^2$\\
$\alpha^0 =(\underline{0},1)$ & $\alpha^1 =(1,0)$ & $\alpha^2 =(1,\underline{1})$ & $\alpha^1$\\
$\alpha^0 =(\underline{0},1)$ & $\alpha^2 =(1,1)$ & $\alpha^1 =(1,0)$ & $0$\\
\hline
$\alpha^1 =(1,0)$ & $0 =(0,0)$ & $\alpha^1 =(1,0)$ & $\alpha^1$\\
$\alpha^1 =(1,0)$ & $\alpha^0 =(0,1)$ & $\alpha^2 =(1,\underline{1})$ & $0$\\
$\alpha^1 =(1,0)$ & $\alpha^1 =(1,0)$ & $0 =(0,0)$ & $\alpha^0$\\
$\alpha^1 =(1,0)$ & $\alpha^2 =(1,1)$ & $\alpha^0 =(0,\underline{1})$ & $\alpha^2$\\
\hline
$\alpha^2 =(1,1)$ & $0 =(0,0)$ & $\alpha^2 =(1,\underline{1})$  & $\alpha^2$\\
$\alpha^2 =(1,1)$ &$\alpha^0 =(0,1)$  & $\alpha^1 =(1,0)$ & $\alpha^0$\\
$\alpha^2 =(1,1)$ &$\alpha^1 =(1,0)$  & $\alpha^0 =(0,\underline{1})$ & $0$\\
$\alpha^2 =(1,1)$ &$\alpha^2 =(1,1)$  & $0 =(0,0)$ & $\alpha^1$\\
\hline
\end{tabular}
\end{center}
\end{table}

An eavesdropper capable of re-selecting the tapping links at each time is able to wiretap an element of $\{(P^{(1)}, Q^{(2)})\mid P,Q \in \{X_1,X_2,(X_1 + X_2)\}\}$. Recall that $P^{(i)}$ represents the $i$-th element of the vector representation of $P \in \mathbb{F}_{q^m}$. When the sequence $(X_1^{(1)}, (X_1 + X_2)^{(2)}) = (0,1)$ (underlined on the table) is wiretapped, the candidates for $S$ are narrowed down to $\alpha^0, \alpha^1$, implying
\begin{eqnarray*}
&&H(S|X_1^{(1)},(X_1 + X_2)^{(2)}) \neq H(S)\\
&\Rightarrow& I(S;X_1^{(1)},(X_1 + X_2)^{(2)}) \neq 0\\
&\Rightarrow& I(S;\tilde{W} = \tilde{B}\bar{X}) \neq 0, \text{for some } \tilde{B}, \text{rank} \tilde{B} = 2.
\end{eqnarray*}
Therefore, we can conclude that this code does not attain universal $m$-strong security.

\section{Security Analysis}

In this section, we analyze the security of the universal secure network code against our stronger eavesdropping model. The example presented in the previous section shows that the universal code is not universal $m$-strongly secure \emph{in general}. Construction of the universal code involves the choice of parameters $n,k,q,m$, a parity check matrix $H$, and a basis of $\mathbb{F}_{q^m}$. A natural question to ask at this point is, if it is possible to secure this code by restricting these parameters. We show that universal $m$-strong security cannot be attained no matter how they are chosen. We also analyze the cases with a restricted number of tapping links and tapping duration.

\subsection{Proof of Vulnerability for $\mu = n - k$}

As a preparation, we first derive the necessary and sufficient condition for the universal code to be universal $m$-strongly secure. Let
\begin{equation*}
N_{s,w}^{\tilde{B}} \triangleq |\{x \in \mathbb{F}_{q^m}^n\mid s = Hx, w = \tilde{B}\bar{x}\}|.
\end{equation*}
\begin{lemma}\label{lem:nk}
The necessary and sufficient condition for the universal coding scheme with parameters $n,k,q,m,H$ and a fixed basis of $\mathbb{F}_{q^m}$ to attain universal $m$-strong security for $\mu \geq 1$ is, for $\forall w \in \mathbb{F}_{q}^{m\mu}$,$\forall B_t \in \mathbb{F}_q^{\mu \times n}, \text{rank}B_t = \mu, 1 \leq t \leq m$ the following holds:
\begin{equation*}
N_{s,w}^{\tilde{B}} = N_{s',w}^{\tilde{B}}, \forall s, s' \in \mathbb{F}_{q^m}^k.
\end{equation*}
\end{lemma}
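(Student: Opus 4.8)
The plan is to rewrite both conditions of Definition~\ref{def:m-univ} in terms of the joint law of $(S,\tilde W)$ and then to reduce the quantifier ``$\forall B_t\in\mathbb{F}_q^{\mu\times n}$'' to the full-rank matrices alone. First I would record that coset coding makes $(S,X)$ distributed exactly as if $X$ were drawn uniformly from $\mathbb{F}_{q^m}^n$ and $S=HX$: indeed $\Pr[S=s,X=x]=q^{-mn}$ whenever $Hx=s$, since $S$ is uniform over $\mathbb{F}_{q^m}^k$ (as $H$ has full row rank $k$) and $X\mid S$ is uniform over the coset of size $q^{m(n-k)}$. The condition $H(S\mid Y)=0$ is just decodability, which holds whenever the outer code is uniquely decodable and the network code is feasible; so only the secrecy condition $I(S;\tilde W=\tilde B\bar X)=0$ is in question.

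Second, I would fix $\tilde B$ and compute the joint distribution. Since $x\mapsto\bar x$ is a bijection, $\bar X$ is uniform over $\mathbb{F}_q^{mn}$, whence
\[
\Pr[S=s,\tilde W=w]=\frac{N_{s,w}^{\tilde B}}{q^{mn}},\qquad \Pr[S=s]=q^{-mk}.
\]
Because the marginal of $S$ is constant in $s$, independence of $S$ and $\tilde W$ is equivalent to $\Pr[S=s,\tilde W=w]$ being independent of $s$ for every $w$, i.e.\ to $N_{s,w}^{\tilde B}=N_{s',w}^{\tilde B}$ for all $s,s'$; summing the joint over the $q^{mk}$ values of $s$ recovers $\Pr[\tilde W=w]$ and supplies the missing factorization, closing both directions. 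Hence, \emph{for each individual $\tilde B$}, $I(S;\tilde W)=0$ holds iff the counting identity holds for that $\tilde B$.

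Third comes the reduction to full-rank blocks, which I expect to be the main obstacle, since the definition ranges over all $B_t$ while the lemma constrains only those of rank $\mu$. Necessity is immediate: security for all $B_t$ in particular holds for the rank-$\mu$ blocks, giving the counting identity by the previous step. For sufficiency I would argue by data processing. Given any $B_t$ of rank $r_t\le\mu$, I extend a basis of its row space to $\mu$ linearly independent vectors, obtaining a full-rank $B_t'\in\mathbb{F}_q^{\mu\times n}$ whose row space contains that of $B_t$ (this uses $\mu\le n$, which holds as $\mu=n-k$). Block-diagonality then forces the row space of $\tilde B$ to lie in that of the full-rank $\tilde B'$, so $\tilde B=T\tilde B'$ for some $T$ and $\tilde B\bar X$ is a deterministic function of $\tilde B'\bar X$; the data-processing inequality yields $I(S;\tilde B\bar X)\le I(S;\tilde B'\bar X)=0$. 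Thus the counting identity over all rank-$\mu$ blocks already forces secrecy against every $B_t$.

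Finally I would assemble the equivalences: the counting identity for all rank-$\mu$ configurations is, by the single-$\tilde B$ characterization, the same as $I(S;\tilde W)=0$ for all full-rank $\tilde B$, which by the reduction is the same as $I(S;\tilde W)=0$ for all $B_t\in\mathbb{F}_q^{\mu\times n}$; together with the automatic decodability condition this is precisely universal $m$-strong security. The points I would state most carefully are the existence of the full-rank extension (and hence the implicit requirement $\mu\le n$) and the fact that the block-diagonal structure turns per-block row-space containment into containment for the whole matrix $\tilde B$, which is exactly what lets a per-block rank hypothesis control the global wiretap matrix.
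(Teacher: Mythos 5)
Your proposal is correct and follows essentially the same route as the paper: both translate $I(S;\tilde W)=0$ into the statement that $\Pr(S=s\mid\tilde W=w)$ is constant in $s$, use the uniformity of $X$ over $\mathbb{F}_{q^m}^n$ (and of $S$ over $\mathbb{F}_{q^m}^k$) to turn that into the counting identity $N_{s,w}^{\tilde B}=N_{s',w}^{\tilde B}$, and then restrict attention to full-rank $B_t$. The only difference is that the paper merely asserts in one sentence that full-rank $B_t$ suffice, whereas you justify it explicitly via the row-space extension and the data-processing inequality, which is a welcome (and correct) elaboration rather than a different argument.
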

\begin{proof}By the definition of universal $m$-strong security, for $\forall w \in \mathbb{F}_q^{m\mu}$,
\begin{eqnarray}
&&I(S;\tilde{W}) = 0 \nonumber \\
&\Leftrightarrow& \text{Pr}(S = s| \tilde{W} = w) = \text{Pr}(S = s), \forall s \in \mathbb{F}_{q^m}^k \nonumber \\
&\Leftrightarrow&\frac{|\{x \in \mathbb{F}_{q^m}^n\mid s=Hx,w=\tilde{B}\bar{x}\}|}{|\{x \in \mathbb{F}_{q^m}^n\mid w=\tilde{B}\bar{x}\}|} = \frac{1}{q^{mk}}, \forall s \label{eqn:lemnka}\\
&\Leftrightarrow& N_{s,w}^{\tilde{B}} = \frac{|\{x \in \mathbb{F}_{q^m}^n\mid w=\tilde{B}\bar{x}\}|}{q^{mk}}, \forall s. \nonumber
\end{eqnarray}
\noindent Equation (\ref{eqn:lemnka}) holds because $X$ is distributed uniformly over $\mathbb{F}_{q^m}^n$ and $S$ is distributed uniformly over $\mathbb{F}_{q^m}^k$. Note that to attain universal $m$-strong security, it is sufficient to satisfy the security condition for all full-rank $B_t, 1 \leq t \leq m$.
\end{proof}
\noindent We prove the vulnerability for the special case $\mu = n - k$, which corresponds to the case considered in the work by Silva et al. 
\begin{lemma}\label{lem:1}
The necessary and sufficient condition for the universal coding scheme with parameters $n,k,q,m,H$ and a fixed basis of $\mathbb{F}_{q^m}$ to attain universal $m$-strong security for $\mu = n - k$ is, for $\forall w \in \mathbb{F}_{q}^{m\mu}$, $\forall B_t \in \mathbb{F}_q^{\mu \times n}$, $\text{rank}B_t = \mu, 1 \leq t \leq m$, $\mathcal{X}_{w} = \{x\in\mathbb{F}_{q^m}^{n}\mid w=\tilde{B}\bar{x}\}$, the following holds:
\begin{equation*}
x \neq x' \Rightarrow Hx \neq Hx', \forall x,x' \in \mathcal{X}_w.
\end{equation*}
\end{lemma}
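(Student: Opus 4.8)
The plan is to deduce this from Lemma~\ref{lem:nk} by recasting its ``equal fiber size'' characterization as a counting statement that, precisely when $\mu = n-k$, collapses to injectivity. By Lemma~\ref{lem:nk}, the code is universal $m$-strongly secure iff $N_{s,w}^{\tilde{B}}$ is independent of $s$ for every admissible $w$ and every full-rank choice of the $B_t$. The first observation is that $N_{s,w}^{\tilde{B}}$ is exactly the number of $x \in \mathcal{X}_w$ with $Hx = s$, i.e.\ the cardinality of the fiber over $s$ of the restricted map $H|_{\mathcal{X}_w}\colon \mathcal{X}_w \to \mathbb{F}_{q^m}^k$. Hence the security criterion says precisely that all fibers of $H|_{\mathcal{X}_w}$ share a common size.

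First I would pin down $|\mathcal{X}_w|$. The correspondence $x \mapsto \bar{x}$ is an $\mathbb{F}_q$-linear bijection $\mathbb{F}_{q^m}^n \to \mathbb{F}_q^{mn}$, so $\mathcal{X}_w$ is the image of a coset of $\ker \tilde{B}$. Since each $B_t$ has rank $\mu$ and $\tilde{B}$ is block diagonal, $\tilde{B}$ has full row rank $m\mu$; thus $\bar{x} \mapsto \tilde{B}\bar{x}$ is onto $\mathbb{F}_q^{m\mu}$ and every fiber has size $q^{mn-m\mu}$. Using $\mu = n-k$ this yields $|\mathcal{X}_w| = q^{mn - m(n-k)} = q^{mk}$, which is exactly $|\mathbb{F}_{q^m}^k|$, the number of possible syndromes $s$. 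This matching of cardinalities is the crux on which the whole equivalence turns, and it is the sole place the hypothesis $\mu = n-k$ is used in an essential way.

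With this in hand the two directions are short counting arguments. Summing over all syndromes gives $\sum_{s} N_{s,w}^{\tilde{B}} = |\mathcal{X}_w| = q^{mk}$. If security holds, the common fiber size $c$ satisfies $c \cdot q^{mk} = q^{mk}$, forcing $c = 1$; every syndrome then has exactly one preimage in $\mathcal{X}_w$, which is precisely the asserted injectivity $x \neq x' \Rightarrow Hx \neq Hx'$. Conversely, if $H|_{\mathcal{X}_w}$ is injective, then it is an injection between two sets each of size $q^{mk}$, hence a bijection, so $N_{s,w}^{\tilde{B}} = 1$ for every $s$; this constant value meets the criterion of Lemma~\ref{lem:nk}. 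The only point needing care is the full-rank reduction, which Lemma~\ref{lem:nk} already licenses, together with the observation that the block-diagonal structure guarantees $\text{rank}\,\tilde{B} = m\mu$ so that the count $|\mathcal{X}_w| = q^{mk}$ is valid for every $w$ in the stated range. I expect no genuine obstacle beyond making this dimension bookkeeping airtight, since the equality $\mu = n-k$ is exactly what pins $c$ to $1$ and thereby identifies the security condition with injectivity.
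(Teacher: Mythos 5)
Your proposal is correct and follows essentially the same route as the paper: both invoke Lemma~\ref{lem:nk}, compute $|\mathcal{X}_w| = q^{mn-\operatorname{rank}\tilde{B}} = q^{mk}$ from the full row rank of the block-diagonal $\tilde{B}$, and conclude that the common fiber size of $H|_{\mathcal{X}_w}$ must equal $1$, which is exactly the injectivity condition. The only cosmetic difference is that you phrase the final step as a summation/counting argument over fibers, whereas the paper directly substitutes $|\mathcal{X}_w|/q^{mk}=1$.
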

\begin{proof}
By Lemma \ref{lem:nk}, $\forall w \in \mathbb{F}_q^{m\mu}$,
\begin{eqnarray}
&& N_{s,w}^{\tilde{B}} = \frac{|\{x \in \mathbb{F}_{q^m}^n\mid w=\tilde{B}\bar{x}\}|}{q^{mk}}, \forall s \nonumber\\
&\Leftrightarrow& |\{x \in \mathcal{X}_w\mid s=Hx\}| = 1,\forall s \label{eqn:lem1b}\\
&\Leftrightarrow& x \neq x' \Rightarrow Hx \neq Hx', \forall x, x' \in \mathcal{X}_w \nonumber.
\end{eqnarray}
\noindent Equation (\ref{eqn:lem1b}) holds since 
\begin{eqnarray*}
|\{x \in \mathbb{F}_{q^m}^n\mid w = \tilde{B}\bar{x}\}| &=& q^{\dim \ker \tilde{B}}\\
&=& q^{(mn - \text{rank}\tilde{B})}\\
&=& q^{(mn - m(n-k))} = q^{mk}.
\end{eqnarray*}
\end{proof}

\noindent Note that if Lemma \ref{lem:1} holds for set $\mathcal{X}_w$ then the lemma holds for any of its subsets. Let $w_{i,t} \in \mathbb{F}_q$ be the information tapped at time $t$ on the $i$-th link. Then, by representing $w$ as
\begin{equation*}
w = (w_{1,1}, w_{2,1}, \cdots, w_{\mu,1}, \cdots \cdots, w_{1,m}, w_{2,m}, \cdots w_{\mu,m})^T, 
\end{equation*}
\noindent Lemma \ref{lem:1} yields the following corollary.

\begin{corollary}\label{corollary}
The necessary and sufficient condition for the universal coding scheme with parameters $n,k,q,m,H$ and a fixed basis of $\mathbb{F}_{q^m}$ to satisfy universal $m$-strong security for $\mu = n - k$ is that
\begin{equation*}
x \neq x' \Rightarrow Hx \neq Hx', \forall x, x' \in \mathcal{X}
\end{equation*}
holds for an arbitrary set $\mathcal{X} \subseteq \mathbb{F}_{q^m}^n$ such that $\forall x \in \mathcal{X}$ satisfies
\begin{equation*}
\left\{
\begin{array}{l}
(\vec{b}_{1,1}\cdot x)^{(1)} = w_{1,1}, \cdots, (\vec{b}_{1,m}\cdot x)^{(m)} = w_{1,m},\\
(\vec{b}_{2,1}\cdot x)^{(1)} = w_{2,1}, \cdots, (\vec{b}_{2,m}\cdot x)^{(m)} = w_{2,m},\\
\qquad \vdots\\
(\vec{b}_{\mu,1}\cdot x)^{(1)} = w_{\mu,1}, \cdots ,(\vec{b}_{\mu,m}\cdot x)^{(m)} = w_{\mu, m},
\end{array}
\right.
\end{equation*}\label{eqn:corollary}
for $\forall w \in \mathbb{F}_q^{m\mu}$.
\end{corollary}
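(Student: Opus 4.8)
The plan is to derive the corollary as a pure rewriting of Lemma~\ref{lem:1}: the injectivity equivalence itself is already established there, so the only genuine work is to translate the matrix constraint $w=\tilde{B}\bar{x}$ defining $\mathcal{X}_w$ into the explicit per-link, per-time-slot system displayed in the statement, and then to pass from $\mathcal{X}_w$ to an arbitrary subset using the observation recorded just after Lemma~\ref{lem:1}. First I would write $w=\tilde{B}\bar{x}$ out componentwise. Using the block-diagonal form of $\tilde{B}$ together with the definition of $\bar{x}$, the $t$-th block of $\tilde{B}\bar{x}$ is $B_t(x_1^{(t)},\dots,x_n^{(t)})^T$, so its $i$-th entry is the $\mathbb{F}_q$-inner product $\vec{b}_{i,t}\cdot(x_1^{(t)},\dots,x_n^{(t)})$. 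Under the stated ordering of $w$, the matching entry is exactly $w_{i,t}$, so $x\in\mathcal{X}_w$ is equivalent to $\vec{b}_{i,t}\cdot(x_1^{(t)},\dots,x_n^{(t)})=w_{i,t}$ for all $1\le i\le\mu$, $1\le t\le m$.

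The key step is to recognize that this $\mathbb{F}_q$-inner product equals $(\vec{b}_{i,t}\cdot x)^{(t)}$, where $\vec{b}_{i,t}\cdot x\in\mathbb{F}_{q^m}$ is now computed over the extension field and $(\cdot)^{(t)}$ extracts the $t$-th coordinate of its vector representation. Writing $\vec{b}_{i,t}=(b_{i,t,1},\dots,b_{i,t,n})^T$ with every $b_{i,t,j}\in\mathbb{F}_q$, the coordinate-extraction map $y\mapsto y^{(t)}$ is $\mathbb{F}_q$-linear with respect to the fixed basis, so the scalars pull out of the coordinate: $(\sum_j b_{i,t,j}x_j)^{(t)}=\sum_j b_{i,t,j}x_j^{(t)}$, which is precisely the $\mathbb{F}_q$-inner product above. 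With this identification, $\mathcal{X}_w$ is exactly the solution set of the displayed system for the given $w$ and given GCVs $\vec{b}_{i,t}$.

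The hard part, insofar as there is one, is keeping the two inner products straight—the one over $\mathbb{F}_{q^m}$ that produces $\vec{b}_{i,t}\cdot x$ and the one over $\mathbb{F}_q$ that produces the scalar constraint—and justifying that coordinate extraction commutes with the linear combination. This commutation is exactly what fails for general coefficients and succeeds here only because $\vec{b}_{i,t}$ has entries in $\mathbb{F}_q$; I would state this dependence explicitly rather than leave it implicit.

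Finally I would invoke the observation after Lemma~\ref{lem:1}: the property $x\neq x'\Rightarrow Hx\neq Hx'$ holds on $\mathcal{X}_w$ if and only if it holds on every subset $\mathcal{X}\subseteq\mathcal{X}_w$—the forward direction because restricting to fewer pairs preserves a universally quantified implication, and the reverse because $\mathcal{X}_w$ is a subset of itself. Combining this equivalence with the explicit description of $\mathcal{X}_w$, and retaining the quantification over all $w\in\mathbb{F}_q^{m\mu}$ and all full-rank $B_t$ inherited from Lemma~\ref{lem:1}, yields the corollary exactly, with $\mathcal{X}$ ranging over arbitrary subsets cut out by the system. I would close by noting that this ``arbitrary subset'' formulation is the point: it lets the subsequent vulnerability argument refute security by exhibiting a single pair $x\neq x'$ with $Hx=Hx'$ both satisfying the system, rather than having to reason about the entire solution set.
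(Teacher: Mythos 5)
Your proposal is correct and follows essentially the same route as the paper: the core step in both is the identity $(\vec{b}_{i,t}\cdot x)^{(t)} = \vec{b}_{i,t}\cdot(x_1^{(t)},\dots,x_n^{(t)})^T = w_{i,t}$, justified by the fact that multiplication by the $\mathbb{F}_q$-entries of $\vec{b}_{i,t}$ acts coordinate-wise on the vector representation (equivalently, coordinate extraction is $\mathbb{F}_q$-linear), after which the corollary follows from Lemma~\ref{lem:1} together with the subset observation recorded right after it. Your explicit remark that the commutation of coordinate extraction with the linear combination hinges on the coefficients lying in $\mathbb{F}_q$ is a welcome clarification of the same argument, not a departure from it.
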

\begin{proof}
By denoting the $l$-th element of $\vec{b}_{i,t}$ as $b_{i,t}^{[l]}\in \mathbb{F}_q$,
\begin{eqnarray}
(\vec{b}_{i,t}\cdot x)^{(t)} &=& (b_{i,t}^{[1]}x_1 + b_{i,t}^{[2]}x_2 + \cdots + b_{i,t}^{[n]}x_n)^{(t)} \nonumber \\
&=& (b_{i,t}^{[1]}x_1^{(t)}+b_{i,t}^{[2]}x_2^{(t)}+ \cdots + b_{i,t}^{[n]}x_n^{(t)})\label{eqn:t-th}\\ 
&=& \vec{b}_{i,t} \cdot (x_1^{(t)},x_2^{(t)}, \cdots ,x_n^{(t)})^T \nonumber 
\end{eqnarray}
\noindent holds. Note that since $\mathbb{F}_{q^m}$ is a linear space on $\mathbb{F}_q$, 
\begin{eqnarray*}
b_{i,t}^{[l]} x_l &=& b_{i,t}^{[l]} (x_l^{(1)},x_l^{(2)}, \cdots , x_l^{(m)})^T\\
&=& (b_{i,t}^{[l]}x_l^{(1)}, b_{i,t}^{[l]}x_l^{(2)}, \cdots , b_{i,t}^{[l]}x_l^{(m)})^T
\end{eqnarray*}
holds for every $1 \leq l \leq n$, and adding the $t$-th element of each of $b_{i,t}^{[1]} x_1, \cdots , b_{i,t}^{[n]} x_n$ yields Eq.~(\ref{eqn:t-th}). Therefore, we have the relation,
\begin{equation}
w_{i,t} =  \vec{b}_{i,t} \cdot (x_1^{(t)},x_2^{(t)},\cdots,x_n^{(t)})^T  = (\vec{b}_{i,t} \cdot x)^{(t)}.
\label{eqn:inner}
\end{equation}

\noindent The corollary holds immediately from Eq.~(\ref{eqn:inner}) and Lemma \ref{lem:1}.
\end{proof}

\noindent Using this corollary, we prove the following theorem.

\begin{theorem}\label{thm:1}
For any choice of parameters $n,k,q,m,H$ and the basis for $\mathbb{F}_{q^m}$, the universal secure network coding scheme cannot satisfy the universal $m$-strong security condition for $\mu = n - k$.
\end{theorem}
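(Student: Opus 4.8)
The plan is to invoke Corollary~\ref{corollary} and then produce a single explicit counterexample that rules out security for \emph{all} parameter choices at once. Since the corollary shows that universal $m$-strong security for $\mu = n-k$ is equivalent to the injectivity of $x \mapsto Hx$ on every admissible set $\mathcal{X}$, it suffices to exhibit one full-rank tapping configuration $\{B_t\}_{t=1}^m$, one observation $w$, and two distinct vectors $x \neq x'$ both lying in the resulting $\mathcal{X}$ with $Hx = Hx'$.

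First I would translate the collision condition into a statement about the code: $Hx = Hx'$ with $x\neq x'$ holds exactly when $c := x - x'$ is a nonzero element of $\ker H = \mathcal{C}$. Because $\mathcal{C}$ has dimension $n-k = \mu \geq 1$, such a nonzero codeword $c = (c_1,\dots,c_n)^T \in \mathbb{F}_{q^m}^n$ always exists, for \emph{any} parity check matrix $H$ (the MRD structure is never used). Writing $x' = x + c$ then guarantees $Hx' = Hx$ for free, so the entire problem reduces to forcing $x$ and $x+c$ into a common $\mathcal{X}$.

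By Eq.~(\ref{eqn:inner}), $x$ and $x+c$ satisfy the same tapping constraints precisely when $(\vec{b}_{i,t}\cdot c)^{(t)} = \vec{b}_{i,t}\cdot c^{(t)} = 0$ for every $i,t$, where $c^{(t)} := (c_1^{(t)},\dots,c_n^{(t)})^T \in \mathbb{F}_q^n$. Hence for each time slot $t$ I would take the $\mu$ rows of $B_t$ to be linearly independent vectors in the orthogonal complement $(c^{(t)})^\perp \subseteq \mathbb{F}_q^n$. The crucial counting step is that a full-rank such $B_t$ always exists: if $c^{(t)} = 0$ then $(c^{(t)})^\perp = \mathbb{F}_q^n$ and any full-rank $B_t$ qualifies, whereas if $c^{(t)} \neq 0$ then $\dim (c^{(t)})^\perp = n-1 \geq n-k = \mu$, so $(c^{(t)})^\perp$ still contains a $\mu$-dimensional subspace to serve as the row space of $B_t$. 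Choosing $x = 0$ and $x' = c$ then puts both vectors into $\mathcal{X}$ for the observation $w = 0$, with $0 \neq c$ yet $H0 = Hc = 0$; by Corollary~\ref{corollary} this contradicts universal $m$-strong security.

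The only delicate point is the dimension inequality $\dim (c^{(t)})^\perp \geq \mu$ in the case $c^{(t)} \neq 0$, which rests on $\mu = n-k \leq n-1$, i.e.\ on $k \geq 1$. This is exactly the regime in which a nonempty secret is transmitted, so the obstruction is genuine for every meaningful choice of parameters, and the construction goes through uniformly in $q$, $m$, $H$, and the fixed basis of $\mathbb{F}_{q^m}$.
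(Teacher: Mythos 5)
Your proof is correct, and it reaches the conclusion by a genuinely different and more direct route than the paper. The paper's own proof fixes the static tapping pattern given by the first $\mu$ standard basis vectors, perturbs a single tapped link in a single time slot, and derives a contradiction by a pigeonhole count: Corollary~\ref{corollary} would force $H$ to be injective on a set $\{\hat{x}\}\cup\mathcal{X}$ of cardinality $q^{mk}+1$, while the image of $H$ has only $q^{mk}$ elements. You instead exhibit an explicit collision: any nonzero $c\in\ker H$ (which exists since $\dim\ker H=n-k=\mu\geq 1$) is hidden from the adaptive eavesdropper by choosing, at each time $t$, $\mu$ linearly independent rows of $B_t$ inside $(c^{(t)})^{\perp}$, which is possible because $\dim (c^{(t)})^{\perp}\geq n-1\geq n-k$ whenever $k\geq 1$; then $0\neq c$ lie in the same fibre $\mathcal{X}_0$ with $H0=Hc$, violating Lemma~\ref{lem:1}. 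The dimension count, the full-rank check on each $B_t$, and the reduction of the collision condition to a kernel element are all sound, and indeed neither proof uses the MRD property. Your argument makes the mechanism of the attack transparent: each time slot exposes only $\mathbb{F}_q$-linear functionals of the slice $(x_1^{(t)},\dots,x_n^{(t)})$, and a single slice can always be annihilated by $\mu\leq n-1$ independent functionals, whereas no \emph{static} full-rank $B$ can annihilate every slice of a codeword of an MRD code (rank at least $k+1$), which is exactly why your construction does not contradict Silva et al.'s security result for fixed taps. What the paper's argument buys instead is a quantitatively stronger counterexample: its eavesdropper deviates from a fixed tapping pattern on only one link in only one time slot, whereas your $B_t$ may need to change on every slot.
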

\begin{proof}
Assume the existence of the universal code which satisfies the universal $m$-strong security condition. Let $\mathcal{X}$ denote the set of all $x \in \mathbb{F}_{q^m}^n$ satisfying the relation, 
\begin{equation}
\left\{
\begin{array}{l}
x_1^{(1)} = x_1^{(2)} = \cdots = x_1^{(m)} = 0,\\
x_2^{(1)} = x_2^{(2)} = \cdots = x_2^{(m)} = 0,\\
\qquad \vdots\\
x_\mu^{(1)} = x_\mu^{(2)} = \cdots = x_\mu^{(m)} = 0.
\end{array}
\right.
\label{eqn:X}
\end{equation}
\noindent Note that, $|\mathcal{X}| = q^{mn} / q^{m\mu} = q^{mk}$. Let $\alpha$ be an element of $\mathbb{F}_q$, and choose $\hat{x} \in \mathbb{F}_{q^m}^n$ that satisfies the following:
\begin{equation*}
\left\{
\begin{array}{l}
\hat{x}_1^{(1)} = \hat{x}_1^{(2)} = \cdots = \hat{x}_1^{(m)} = 0,\\
\hat{x}_2^{(1)} = \hat{x}_2^{(2)} = \cdots = \hat{x}_2^{(m)} = 0,\\
\qquad \vdots\\
\hat{x}_\mu^{(1)} = \hat{x}_\mu^{(2)} = \cdots = \hat{x}_{\mu}^{(m - 1)} = 0, \hat{x}_{\mu}^{(m)} = 1, \hat{x}_{\mu + 1}^{(m)} = \alpha.
\end{array}
\right.
\end{equation*}
\noindent Such $\hat{x}$ always exists, and satisfies $\hat{x} \not\in \mathcal{X}$. For $\psi \in \mathbb{F}_q$, let $\mathcal{X}_\psi \subseteq \mathcal{X}$ be the set of all $x \in \mathcal{X}$ satisfying $x_{\mu + 1}^{(m)} = \psi$. In other words, $\forall x \in \mathcal{X}_\psi$ satisfies Eq.~(\ref{eqn:Xa}).
\begin{equation}
\left\{
\begin{array}{l}
x_1^{(1)} = x_1^{(2)} = \cdots = x_1^{(m)} = 0,\\
x_2^{(1)} = x_2^{(2)} = \cdots = x_2^{(m)} = 0,\\
\qquad \vdots\\
x_\mu^{(1)} = x_\mu^{(2)} = \cdots = x_{\mu}^{(m)} = 0,  x_{\mu + 1}^{(m)} = \psi.
\end{array}
\right.
\label{eqn:Xa}
\end{equation}
\noindent Note that,
\begin{equation}
\bigcup_{\psi \in \mathbb{F}_q} \mathcal{X}_\psi = \mathcal{X} \label{eqn:prop1} 
\end{equation}
\noindent holds. We see that,
\begin{equation}
\left\{
\begin{array}{l}
x_1^{(1)} = x_1^{(2)} = \cdots = x_1^{(m)} = 0,\\
x_2^{(1)} = x_2^{(2)} = \cdots = x_2^{(m)} = 0,\\
\qquad \vdots\\
x_\mu^{(1)} = x_\mu^{(2)} = \cdots = x_{\mu}^{(m-1)} = 0, (\gamma x_\mu +  x_{\mu + 1})^{(m)} = \psi,
\end{array}
\right.
\label{eqn:Xalphai}
\end{equation}
\noindent holds for $\forall x \in \{\hat{x}\} \cup \mathcal{X}_\psi$ by the definition of $\hat{x}$ and $\mathcal{X}_\psi$, where $\gamma = \psi - \alpha$. Note that, Corollary \ref{corollary} can be applied to the set $\{\hat{x}\} \cup \mathcal{X}_\psi$ because relation (\ref{eqn:Xalphai}) can be represented as 
\begin{equation*}
\left\{
{\footnotesize 
\begin{array}{l}
((1,0,\cdot\cdots\cdots\cdots\cdots,0)\cdot x)^{(1)} = 0,\cdots,((1,0,\cdots\cdots\cdots\cdots\cdot\cdot\cdot\,\,,0)\cdot x)^{(m)} = 0,\\
((0,1,0,\cdot\cdots\cdots\cdots\cdot\,\,,0)\cdot x)^{(1)} = 0,\cdots, ((0,1,0,\cdots\cdots\cdots\cdot\cdot\,\cdot\,\,,0)\cdot x)^{(m)} = 0,\\
\hspace{37mm}\vdots\\
((0,\cdot\cdot,0,\underbrace{1}_{\mu\text{-th}},0,\cdot\cdot,0)\cdot x)^{(1)} = 0,\cdots, ((0,\cdot\cdot,0,\underbrace{\gamma}_{\mu\text{-th}},1,0,\cdot\cdot,0)\cdot x)^{(m)} = \psi,
\end{array}
}
\right.
\end{equation*}
\noindent and because it can be confirmed that the corresponding matrices $B_t$ in Corollary \ref{corollary} satisfy $\text{rank}B_t = \mu$, $1 \leq t \leq m$. Applying Corollary \ref{corollary} to the set $\{\hat{x}\} \cup \mathcal{X}_\psi$ we have 
\begin{equation*}
x \neq x' \Rightarrow Hx \neq Hx', \forall x,x' \in \{\hat{x}\} \cup \mathcal{X}_\psi,\forall \psi \in \mathbb{F}_q.
\end{equation*}
\noindent This result, combined with Eq.~(\ref{eqn:prop1}), yields
\begin{equation}
H\hat{x} \neq Hx, \forall x \in \mathcal{X}. \label{eqn:theo3a}
\end{equation}
\noindent Corollary \ref{corollary} can be applied to the set $\mathcal{X}$ as well by relation (\ref{eqn:X}), hence
\begin{equation}
x \neq x' \Rightarrow Hx \neq Hx', \forall x, x' \in \mathcal{X}\label{eqn:theo3b}
\end{equation}
holds. Therefore, Eqs.~(\ref{eqn:theo3a}) and (\ref{eqn:theo3b}) yield
\begin{equation}
x \neq x' \Rightarrow Hx \neq Hx', \forall x,x' \in \{\hat{x}\} \cup \mathcal{X}.
\label{eqn:last}
\end{equation}
\noindent However, by $\hat{x} \not \in \mathcal{X}$ and $|\mathcal{X}| = q^{mk}$, for Eq.~(\ref{eqn:last}) to hold, it is necessary that
\begin{equation*}
|\{Hx \mid x \in \mathbb{F}_{q^m}^n\}|\geq |\{\hat{x}\} \cup \mathcal{X}| = q^{mk} + 1
\end{equation*}
\noindent holds, which contradicts with
\begin{equation*}
|\{Hx\mid x \in \mathbb{F}_{q^m}^n\}| = {(q^m)}^{\text{rank}H} = q^{mk},
\end{equation*}
\noindent Therefore, a code constructed by the universal coding scheme that attains universal $m$-strong security does not exist.
\end{proof}

\subsection{Proof of Vulnerability for $1 \leq \mu \leq n - k$}
We now consider the more general case, $1 \leq \mu \leq n - k$, and prove that the code still cannot be secure for any $\mu$. First, we define the following to simplify the notations:
\begin{itemize}
\item $\mathcal{H}_s \triangleq \{x \in \mathbb{F}_{q^m}^n \mid s = Hx \}$,
\item $\mathcal{X}_w^{\tilde{B}} \triangleq \{x \in \mathbb{F}_{q^m}^n\mid w = \tilde{B}\bar{x}\}$.
\end{itemize}
\noindent Now we prove the following lemma.
\begin{lemma}\label{lem:3}
If the universal coding scheme is universal $m$-strongly secure for $\mu = 1$, then it is universal $m$-strongly secure for $\mu = n - k$.
\end{lemma}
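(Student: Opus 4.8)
The plan is to prove the contrapositive: assuming the scheme is \emph{not} universal $m$-strongly secure for $\mu = n-k$, I will construct an explicit $\mu = 1$ eavesdropper that breaks it, so that the scheme cannot be secure for $\mu = 1$ either. Throughout, write $V := \ker H = \{x \in \mathbb{F}_{q^m}^n \mid Hx = 0\}$, which is an $\mathbb{F}_{q^m}$-subspace of dimension $n-k$ and hence an $\mathbb{F}_q$-space of dimension $m(n-k)$, and write $x^{(t)} := (x_1^{(t)}, \ldots, x_n^{(t)})^T \in \mathbb{F}_q^n$ for the $t$-th layer of $x$.

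First I would record the linear-algebraic meaning of the two security conditions. Using Lemma~\ref{lem:nk} together with the fact that $X$ is uniform, being secure for a \emph{given} full-rank family $\{B_t\}$ is equivalent to the $\mathbb{F}_q$-linear map $x \mapsto (Hx, \tilde{B}\bar{x})$ being surjective, i.e.\ to $\dim_{\mathbb{F}_q}\{x \in V \mid \vec{b}_{i,t}\cdot x^{(t)} = 0,\ \forall i,t\} = m(n-k-\mu)$. Equivalently, introducing the $\mu m$ functionals $\phi_{i,t}(x) := \vec{b}_{i,t}\cdot x^{(t)}$ on $V$, security for this family is exactly the statement that the $\phi_{i,t}$ are $\mathbb{F}_q$-linearly independent on $V$, and security for the parameter is independence for every admissible full-rank family.

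Then I would start from the assumed failure at $\mu = n-k$. By Lemma~\ref{lem:1} (equivalently Corollary~\ref{corollary}) there are full-rank $B_t \in \mathbb{F}_q^{(n-k)\times n}$ and $x \neq x'$ in $\mathcal{X}_w^{\tilde{B}}$ with $Hx = Hx'$; set $z := x - x' \neq 0$, so $z \in V$ and $\vec{b}_{i,t}\cdot z^{(t)} = 0$ for all $i,t$. The family $\{\phi_{i,t}\}$ thus consists of exactly $m(n-k) = \dim_{\mathbb{F}_q} V$ functionals that all vanish at the nonzero vector $z$; if they spanned $V^{*}$ their common kernel would be trivial, contradicting $z \neq 0$, so they are $\mathbb{F}_q$-linearly dependent. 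Taking a nontrivial relation $\sum_{i,t} a_{i,t}\phi_{i,t} = 0$ and collecting terms by time slot gives $\sum_t \vec{d}_t \cdot x^{(t)} = 0$ on $V$, where $\vec{d}_t := \sum_i a_{i,t}\vec{b}_{i,t}$; since each $B_t$ has independent rows, not all $\vec{d}_t$ can vanish.

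Finally I would read this relation off as a $\mu = 1$ attack: define the one-link-per-slot eavesdropper by $\vec{c}_t := \vec{d}_t$ whenever $\vec{d}_t \neq 0$ and by any fixed nonzero vector otherwise. The resulting single-row functionals $\phi_t(x) = \vec{c}_t \cdot x^{(t)}$ obey the nontrivial relation $\sum_{t : \vec{d}_t \neq 0}\phi_t = 0$, so they are $\mathbb{F}_q$-linearly dependent; by the equivalence from the first step the scheme is not secure for $\mu = 1$, which establishes the contrapositive. The main obstacle I anticipate is exactly this passage from a dependence among the $m(n-k)$ ``wide'' per-slot functionals to one among only $m$ ``narrow'' functionals: naive deletion of rows destroys the dependence, and the correct move is to group the relation by time slot and then pad the slots not participating in the relation with arbitrary nonzero tapping vectors, verifying that this yields an admissible rank-one family $\{B_t\}$ and that the surviving relation is still nontrivial.
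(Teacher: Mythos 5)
Your proposal is correct, but it proves the lemma by a genuinely different route than the paper. The paper argues directly: assuming $\mu=1$ security, it takes an arbitrary full-rank $\mu=n-k$ attack $B^*$, regroups its rows into $n-k$ separate one-link-per-slot attacks $B^{[i]}$ (the $i$-th row of each $B^*_t$), rewrites $N^{B^*}_{s,w^*}$ as $\bigl|\bigl(\bigcap_i \mathcal{X}^{B^{[i]}}_{w^{[i]}}\bigr)\cap\mathcal{H}_s\bigr|$, and then runs a counting/case analysis: the intersection $\bigcap_i \mathcal{X}^{B^{[i]}}_{w^{[i]}}$ has exactly $q^{mk}$ elements, the $q^{mk}$ classes $\mathcal{H}_s$ partition $\mathbb{F}_{q^m}^n$, and the $\mu=1$ hypothesis is invoked only to exclude the case where the intersection meets some $\mathcal{H}_{s}$ but not others, forcing every $N^{B^*}_{s,w^*}$ to equal $1$. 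You instead prove the contrapositive in the dual space: you recast per-family security as $\mathbb{F}_q$-linear independence of the tapping functionals $\phi_{i,t}(x)=\vec{b}_{i,t}\cdot x^{(t)}$ restricted to $\ker H$ (a correct reformulation of Lemma~\ref{lem:nk}, since for full-rank $B_t$ the fiber-count condition is exactly surjectivity of $x\mapsto(Hx,\tilde{B}\bar{x})$), observe that a failure at $\mu=n-k$ produces a nonzero common zero of $m(n-k)=\dim_{\mathbb{F}_q}\ker H$ functionals and hence a nontrivial dependence, and then collapse that dependence slot-by-slot into an explicit rank-one family. The decompositions are genuinely different: the paper's $\mu=1$ sub-attacks are literal rows of $B^*$, whereas your single tapped GCV at time $t$ is the linear combination $\sum_i a_{i,t}\vec{b}_{i,t}$ dictated by the dependence relation, with the non-participating slots padded by arbitrary nonzero vectors to keep the family admissible. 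Your version buys constructiveness (it exhibits the concrete $\mu=1$ eavesdropping strategy that breaks the code) and replaces the case analysis with a one-line dimension count; the paper's version stays entirely inside the counting framework of Lemmas~\ref{lem:nk} and~\ref{lem:1} and needs no dual-space translation. All the delicate points in your argument check out: the functionals must be taken as restrictions to $\ker H$, the nontriviality of some $\vec{d}_t$ follows from the full rank of each $B_t$, and the surviving relation $\sum_{t:\vec{d}_t\neq 0}\phi_t=0$ is nontrivial because its index set is nonempty.
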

\begin{proof}
When universal $m$-strong security for $\mu = 1$ is attained for some code, by Lemma \ref{lem:nk}, for $\forall w \in \mathbb{F}_{q}^{m}$, $\forall B_t \in \mathbb{F}_q^{1 \times n}$, $\text{rank}B_t = 1$, $1 \leq t \leq m$, the following must hold for this code:
\begin{equation*}
N_{s,w}^{\tilde{B}} = N_{s',w}^{\tilde{B}}, \forall s, s' \in \mathbb{F}_{q^m}^k.
\end{equation*} \label{eqn:asm}
\noindent We will show that, then, such a code attains universal $m$-strongly security for $\mu = n - k$, which implies that by Lemma \ref{lem:nk}, for
$\forall w^* \in \mathbb{F}_{q}^{m(n-k)}$,$\forall B^*_t \in \mathbb{F}_q^{(n-k) \times n}$, $\text{rank}B^*_t = n-k$, $1 \leq t \leq m$,
\begin{equation*}
B^* = 
\left[
\begin{array}{cccc}
B_1^*&&&\\
&B_2^*&&\\
&&\ddots&\\
&&&B_m^*
\end{array}
\right],
\end{equation*}
\noindent the following holds:
\begin{equation*}
N_{s,w^*}^{B^*} = N_{s',w^*}^{B^*}, \forall s, s' \in \mathbb{F}_{q^m}^k.
\end{equation*}
\noindent Let $B^*_{t,i} \in \mathbb{F}_q^{1 \times n}$ denote the $i$-th row of $B^*_t$, and let $B'$ denote the matrix defined as below, using the matrices $B^{[i]} \in \mathbb{F}_q^{m \times mn}$, $1 \leq i \leq n - k$:
\begin{equation*}
B' = 
\left[
\begin{array}{c}
B^{[1]}\\
B^{[2]}\\
\vdots \\
B^{[n-k]}\\
\end{array}
\right],
B^{[i]} = 
\left[
\begin{array}{cccc}
B_{1,i}^*&&&\\
&B_{2,i}^*&&\\
&&\ddots&\\
&&&B_{m,i}^*
\end{array}
\right].
\end{equation*}
\noindent Note that $B'$ is obtained by permuting the rows of $B^*$. Let $w'$ be the column vector obtained by permuting the rows of $w^*$ in the same order as $B'$, and let $w^{[i]} \in \mathbb{F}_q^{m}, 1 \leq i \leq n - k$, denote each $m$ rows of $w'$ as shown below:
\begin{equation}
w' = 
\left[
\begin{array}{c}
w^{[1]}\\
w^{[2]}\\
\vdots \\
w^{[n-k]}\\
\end{array}
\right].
\label{eqn:form}
\end{equation}
\if0
\begin{equation}
w' = 
\left[w^{[1]},w^{[2]}, \cdots, w^{[n-k]}\right]^T.
\label{eqn:form}
\end{equation}
\fi

\noindent Using the notations above, we have
\begin{eqnarray*}
N_{s,w^*}^{B^*} &=& |\{x \in \mathbb{F}_{q^m}^n\mid s = Hx, w^* = B^*\bar{x}\}|\\ 
&=& |\{x \in \mathbb{F}_{q^m}^n\mid s = Hx, w' = B'\bar{x}\}|\\ 
&=& \left| \left\{x \in \mathcal \mathbb{F}_{q^m} \mid s = Hx,
\left[
\begin{array}{c}
w^{[1]}\\
w^{[2]}\\
\vdots \\
w^{[n-k]}\\
\end{array}
\right]
=
\left[
\begin{array}{c}
B^{[1]}\\
B^{[2]}\\
\vdots \\
B^{[n-k]}\\
\end{array}
\right]
\bar{x}\right\}\right| \\
&=& \left| \bigcap_{i = 1}^{n-k} \left\{ x \in \mathcal{X}_{w^{[i]}}^{B^{[i]}} \mid s = Hx \right\} \right|\\
&=& \left| \left(\bigcap_{i = 1}^{n-k} \mathcal{X}_{w^{[i]}}^{B^{[i]}}\right) \bigcap \mathcal{H}_s \right|.
\end{eqnarray*}

\noindent Noting that
\begin{equation}
\left|\cap_{i = 1}^{n-k} \mathcal{X}_{w^{[i]}}^{B^{[i]}}\right|  = \left|\left\{ w^* = B^*\bar{x}  \right\}\right| = q^{mk}, \label{eqn:setsize}
\end{equation}
\noindent we prove $N_{s_1,w^*}^{B^*} = N_{s_2,w^*}^{B^*}, \forall s_1, s_2 \in \mathbb{F}_{q^m}^k$ for each of the following three cases of the set $\left(\bigcap_{i = 1}^{n-k} \mathcal{X}_{w^{[i]}}^{B^{[i]}}\right) \bigcap \mathcal{H}_s $.   
\\
\noindent \emph{Case 1} $\left(\bigcap_{i = 1}^{n-k} \mathcal{X}_{w^{[i]}}^{B^{[i]}}\right) \bigcap \mathcal{H}_s = \phi, \forall s$: By $\bigcup_s \mathcal{H}_s = \mathbb{F}_{q^m}^n$, we have
\begin{equation*}
\left(\bigcap_{i = 1}^{n-k} \mathcal{X}_{w^{[i]}}^{B^{[i]}}\right) \bigcap \mathcal{H}_s = \phi, \forall s \Leftrightarrow \bigcap_{i = 1}^{n-k} \mathcal{X}_{w^{[i]}}^{B^{[i]}} = \phi,
\end{equation*}
\noindent which contradicts with Eq.~(\ref{eqn:setsize}). Thus, this case does not exist.
\\
\noindent \emph{Case 2} $\left(\bigcap_{i = 1}^{n-k} \mathcal{X}_{w^{[i]}}^{B^{[i]}}\right) \bigcap \mathcal{H}_s \neq \phi, \forall s$: Clearly,
\begin{eqnarray}
\left| \left( \bigcap_i \mathcal{X}_{w^{[i]}}^{B^{[i]}}\right) \cap \mathcal{H}_{s}\right| \geq 1, \forall s \label{eqn:b}
\end{eqnarray}
\noindent holds. By Eq.~(\ref{eqn:setsize}), we have
\begin{eqnarray*}
&& \left| \left( \bigcap_i \mathcal{X}_{w^{[i]}}^{B^{[i]}}\right) \cap \bigcup_{s \in \mathbb{F}_{q^m}^k}\mathcal{H}_{s}\right| \leq q^{mk}\\
&\Leftrightarrow& \left| \bigcup_{s \in \mathbb{F}_{q^m}^k} \left( \left( \bigcap_i \mathcal{X}_{w^{[i]}}^{B^{[i]}}\right) \cap \mathcal{H}_{s} \right) \right| \leq q^{mk}\\
&\Leftrightarrow& \left| \left( \bigcap_i \mathcal{X}_{w^{[i]}}^{B^{[i]}}\right) \cap \mathcal{H}_{s}\right| =  1, \forall s.
\end{eqnarray*}
\noindent The last line yields from Eq.~(\ref{eqn:b}), $|\mathbb{F}_{q^m}^k| = q^{mk}$, and
\begin{equation*}
\mathcal{H}_{s_1} \cap \mathcal{H}_{s_2} = \phi, \forall s_1, s_2 \in \mathbb{F}_{q^m}^k, s_1\neq s_2.
\end{equation*}
\noindent Thus, the lemma holds for this case.\\
\noindent \emph{Case 3} Otherwise: There exist $s_1, s_2 \in \mathbb{F}_{q^m}^k$ and $1 \leq l \leq n-k$ that satisfy the following:
\begin{eqnarray*}
\left\{
\begin{array}{c}
\mathcal{X}_{w^{[l]}}^{B^{[l]}} \bigcap \mathcal{H}_{s_1} = \phi\\
\mathcal{X}_{w^{[l]}}^{B^{[l]}} \bigcap \mathcal{H}_{s_2} \neq \phi
\end{array}
\right.
\Leftrightarrow
\left\{
\begin{array}{c}
N_{s_1,w^{[l]}}^{B^{[l]}} = 0\\
N_{s_2,w^{[l]}}^{B^{[l]}} \neq 0
\end{array}.
\right.
\end{eqnarray*}
\noindent However, $N_{s_1,w^{[l]}}^{B^{[l]}} \neq N_{s_2,w^{[l]}}^{B^{[l]}}$ contradicts with the assumption that universal $m$-strong security for $\mu = 1$ is satisfied. Thus, Case 3 does not exist.\\
\noindent We considered all three cases, which cover all possible cases and are disjoint, and conclude that the lemma holds since it holds for Case 2 which is the only existing case.
\end{proof}
Assume the existence of a secure network code that satisfies the universal $m$-strong security condition for some $1 \leq \mu \leq n-k$. Then this code must satisfy the security condition for $\mu = 1$ because in this case, the amount of information that can be wiretapped is obviously no more than the case for $1 \leq \mu \leq n - k$. Then by Lemma \ref{lem:3}, this code satisfies the security condition for $\mu = n - k$, which contradicts with Theorem \ref{thm:1} stating that universal $m$-strong security for $\mu = n - k$ cannot be attained. Hence, we have the following result.

\begin{theorem}
For any choice of parameters $n,k,q,m,H$ and the basis for $\mathbb{F}_{q^m}$, the universal secure network coding scheme cannot attain universal $m$-strong security for $1 \leq \mu \leq n - k$.
\end{theorem}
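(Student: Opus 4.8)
The plan is to argue by contradiction through a reduction that collapses the general range onto the boundary case $\mu = n-k$ already settled by Theorem~\ref{thm:1}. Suppose some code produced by the universal scheme attains universal $m$-strong security for a fixed $\mu$ with $1 \le \mu \le n-k$. The first and only substantive step is to descend from this $\mu$ down to $\mu = 1$: I would show that security against a $\mu$-link eavesdropper forces security against a single-link eavesdropper.

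To carry out the descent, fix arbitrary nonzero rows $\vec{b}_{1,t}^{T} \in \mathbb{F}_q^{1 \times n}$ for $1 \le t \le m$, let $\tilde{B}'$ be the corresponding block-diagonal matrix, and write $\tilde{W}' = \tilde{B}'\bar{X}$ for the single-link observation. Since $\mu \le n-k \le n$, each nonzero row extends to a full-rank matrix $B_t \in \mathbb{F}_q^{\mu \times n}$ by appending $\mu - 1$ further rows; assembling these into the block-diagonal $\tilde{B}$ yields a $\mu$-link eavesdropper whose observation $\tilde{W} = \tilde{B}\bar{X}$ contains $\tilde{W}'$ as a coordinate sub-vector. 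By the assumed $\mu$-security, together with the reduction to full-rank blocks supplied by Lemma~\ref{lem:nk}, we have $I(S;\tilde{W}) = 0$, and monotonicity of mutual information then gives $I(S;\tilde{W}') \le I(S;\tilde{W}) = 0$. As the rows were arbitrary, the code attains universal $m$-strong security for $\mu = 1$.

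The remaining steps are immediate. Lemma~\ref{lem:3} upgrades security for $\mu = 1$ to security for $\mu = n-k$, while Theorem~\ref{thm:1} asserts that no code from the universal scheme can be universal $m$-strongly secure for $\mu = n-k$ under any choice of parameters $n,k,q,m,H$ and basis. These two conclusions contradict each other, so the assumed secure code cannot exist, and the theorem follows for every $\mu$ in the range.

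The conceptual content lies entirely in Lemma~\ref{lem:3} and Theorem~\ref{thm:1}; the genuinely new ingredient here is the monotonicity descent to $\mu = 1$, and that is also where the only care is needed. One must verify that a single nonzero row extends to a full-rank $\mu \times n$ block (guaranteed by $\mu \le n$) so that Lemma~\ref{lem:nk}'s restriction to full-rank $B_t$ can legitimately be invoked, and that $\tilde{W}'$ genuinely sits as a projection of $\tilde{W}$ so that the bound $I(S;\tilde{W}') \le I(S;\tilde{W})$ applies. Both points are routine once the embedding of the single-link tap into the $\mu$-link tap is written out, so I do not expect a substantive obstacle beyond this bookkeeping.
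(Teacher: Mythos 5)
Your proposal is correct and follows essentially the same route as the paper: reduce from the assumed $\mu$ to $\mu=1$ by monotonicity of the eavesdropper's information, apply Lemma~\ref{lem:3} to climb back up to $\mu=n-k$, and derive a contradiction with Theorem~\ref{thm:1}. The only difference is that you spell out the descent to $\mu=1$ (extending a nonzero row to a full-rank $B_t$ and invoking monotonicity of mutual information), a step the paper dismisses in one informal sentence as obvious.
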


\subsection{Restricted Tapping Time}

Now we consider the case when the tapping duration is generalized to $1 \leq m' \leq m$ in addition to the generalized $\mu$ considered in the previous part. Since the tapping duration is restricted, we do not restrict $\mu$ to $1 \leq \mu \leq n - k$, and assume $1 \leq \mu$ instead. This imposes an additional condition, $B_t = O,\forall t \in M$ for any choice of $M \subseteq \{1,2, \cdots, m\}, |M| = m - m'$, on Definition \ref{def:m-univ}. Note that the set $M$ represents the set of time slot indices at which wiretapping does not occur. We are interested in, with which pairs of $\mu$ and $m'$ the universal code becomes secure. From the discussions up to this point and the result of Silva et al., the following is clear: 
\begin{itemize}
\setlength{\parskip}{0cm} 
\setlength{\itemsep}{0cm}
\item $\mu = 1$ and $1 \leq m' \leq n - k$: secure
\item $1 \leq \mu \leq n$ and $m' = m$: insecure
\item $1 \leq \mu \leq  n - k$ and $m' = 1$: secure
\end{itemize}

\noindent Additionally, by the necessary and sufficient condition of universal $m$-strong security in Lemma \ref{lem:1} we have, 
\begin{eqnarray}
&&N_{s,w}^{\tilde{B}} = \frac{|\{x \in \mathbb{F}_{q^m}^n\mid w=\tilde{B}\bar{x}\}|}{q^{mk}}, \forall s \nonumber \\
&\Leftrightarrow&N_{s,w}^{\tilde{B}} = \frac{q^{mn - m'\mu}}{q^{mk}}, \forall s \label{eqn:nec}
\end{eqnarray}
\noindent Since the RHS takes a positive value, and by the definition of $N_{s,w}^{\tilde{B}}$ the LHS must be a non-negative integer, a necessary condition for satisfying Eq.~(\ref{eqn:nec}), or the necessary condition for the code to be universal $m$-strongly secure, is as follows:
\begin{eqnarray}
&&\frac{q^{mn - m'\mu}}{q^{mk}} \geq 1 \nonumber \\
&\Leftrightarrow& q^{mn - m'\mu} \geq q^{mk} \nonumber \\
&\Leftrightarrow& m' \leq \frac{m(n-k)}{\mu}\label{eqn:bound}.
\end{eqnarray}
\noindent 

\noindent For any fixed $m'$, $m$, $n$, and $k$, Eq.~(\ref{eqn:bound}) is unsatisfied for $\mu \geq m(n-k) + 1$ because of the restriction $ m' \geq 1$. Note that being able to wiretap $n$ arbitrary links allows the wiretapper to obtain the maximum amount of information that can possibly be wiretapped over the $m'$ time slots, by continuously tapping the $n$ links with GCVs that form a basis of $\mathbb{F}_q^n$. Thus, the amount of information obtained by the wiretapper with $\mu \geq n$ is at least as much as what is obtained by the wiretapper with $\mu \geq m(n-k) + 1$, which implies that the code is insecure with $\mu \geq n$. Thus, we have another necessary condition,
\begin{eqnarray}
\mu \leq n - 1 \label{eqn:bound2}.
\end{eqnarray}
\noindent Combining Eqs.~(\ref{eqn:bound}) and (\ref{eqn:bound2}) yields the necessary condition, 
\begin{eqnarray*}
\mu \leq \text{min}\left\{ \frac{m}{m'}(n-k), n - 1\right\}.
\end{eqnarray*}

\section{Conclusion}

We proposed an eavesdropping model where the adversary is able to re-select the tapping wires at each time slot during the communication. We proved the impossibility of securing against this model using the universal secure network code proposed by Silva et al.\ for all choices of code parameters, even with a restricted number of tapped links. Moreover, we considered the case with shorter tapping duration, and derived a necessary condition for this code to be secure. The future tasks include improving this condition to a necessary and sufficient one. 

\section*{Acknowledgment}
\if0
\begin{center}
{\normalsize{\bf
\hspace{-3mm}Acknowledgment
}}
\end{center}
\fi
The authors would like to thank the anonymous reviewers of ISIT and IEICE for their valuable comments and suggestions which improved our paper. They also thank Jun Kurihara for his comments during the seminars. This research was partly supported by the Japan Society for the Promotion of Science under Grants-in-Aid No.~20760233.
\vspace{2mm}

\end{document}